\documentclass[11pt]{article}
\usepackage{amsmath,amsfonts,amssymb,amsthm}
\usepackage{graphicx}
\usepackage{tikz}
\usepackage{enumerate}
\usepackage[hyphens]{url}
\usepackage{xcolor}
\usepackage{url}
\usepackage{caption}
\usepackage{subcaption}
\usepackage{tcolorbox}
\usepackage{hyperref}
\usepackage{chemarr}
\usepackage[margin=3.5cm]{geometry}
\usepackage{capt-of}
\usepackage{authblk}
\usepackage{booktabs}

\usepackage{algorithm}
\usepackage{algpseudocode}
\usepackage{float}
\usepackage[numbers]{natbib}
\usepackage{svg}

\usepackage{listings}
\theoremstyle{plain}
\newtheorem{theorem}{Theorem}
\newtheorem{proposition}[theorem]{Proposition}

\theoremstyle{definition}
\newtheorem{definition}[theorem]{Definition}
\newtheorem{example}[theorem]{Example}
\newtheorem{remark}[theorem]{Remark}

\definecolor{darkgreen}{rgb}{0.0,0.7,0.0}

\usepackage{tikz}
\usetikzlibrary{patterns}
\usetikzlibrary{matrix, arrows, decorations.pathmorphing}
\usepackage{tikz-cd}
\tikzset{commutative diagrams/.cd}
\usepackage{framed,lipsum}
\usepackage{float}
\usepackage{pgfplots}
\usetikzlibrary{patterns}
\usetikzlibrary{intersections}
\usetikzlibrary{positioning, shadows, arrows}
\tikzstyle{every node}=[anchor=west, minimum height=3em]
\usetikzlibrary{decorations.pathmorphing}

\newcommand{\mS}{\mathcal{S}}
\newcommand{\mC}{\mathcal{C}}
\newcommand{\mR}{\mathcal{R}}
\newcommand{\SCR}{(\mathcal{S}, \mathcal{C}, \mathcal{R})}
\newcommand{\SCRk}{(\mathcal{S}, \mathcal{C}, \mathcal{R}, \boldsymbol{k})}
\newcommand{\bk}{\boldsymbol{k}}
\newcommand{\bx}{\boldsymbol{x}}

\begin{document}

\title{Implementation of Linear Regression and Linear Interpolation using Reaction Networks}

\author[1]{Aryan Kumar}
\author[2]{Amey Choudhary}
\author[3]{Jiaxin Jin}
\author[4]{Chittaranjan Hens}
\author[5]{Abhishek Deshpande}
\affil[1,2,4,5]{\small Center for Computational Natural Sciences and Bioinformatics, International Institute of Information Technology, Hyderabad} 
\affil[3]{\small Department of Mathematics, Lousiana State University}

\maketitle

\begin{abstract}
Performing statistical inference is an essential component of data science. Our focus in this work is on two inference techniques, viz. regression and interpolation. We propose a reaction network based approach that can implement linear regression (both univariate and multivariate) and linear interpolation. We do this by encoding the steady state concentration of species as the output of these inference techniques. Towards this, we use a novel generalized division module that can handle division of negative numbers. We verify our results by comparing them with in-silico implementation on standard synthetic datasets.

\vspace*{10pt}
\noindent \textbf{Keywords}: Chemical Reaction Networks, Linear Regression, Linear Interpolation, Molecular Programming
\vspace*{10pt}

\end{abstract}

\section{Introduction}\label{sec1}

Nature has an inherent capacity to perform computation. Long before the dawn of silicon computing, living cells have been processing streams of information, responding to the environment and maintaining their internal biological state\cite{Kitano2004vw}. For example, proteins fold into their optimal shape within milliseconds, which is tough even for modern computers and algorithms\cite{levinthals_paradox}.
This suggests there is some inherent capacity in natural systems to perform computation. In 1994, Adleman solved the directed Hamiltonian path problem \cite{adleman1994}, which is NP-complete which led to the field of DNA computing. Since then, programmable biological mediums, like DNA strand displacement cascades \cite{Simmel2019}, have emerged as a medium to realize CRN computations.

Chemical reaction networks can be used as a tool for computation, by representing numbers as concentrations and operations through specific reaction pathways. Such molecular circuits are attractive for applications where traditional silicon computers cannot operate.
Beyond this implementing CRNs offers other advantages like molecules can compute with massive parallelism \cite{Gines2023}, DNA-based systems require order of magnitude less physical space for data storage as compared to traditional storing methods \cite{ceze2019molecular}. If information can be processed directly within the DNA-based system, without being converted into bits that conventional silicon-based computers understand, it could further reduce  intermediate steps. Recently, there has been significant interest in implementing machine learning algorithms \cite{pei2010training}\cite{gopalkrishnan2016scheme} using chemical reaction networks. Prior work has explored implementation of neural networks \cite{anderson2021reactionnetworkimplementationsneural} \cite{gunawardena2003chemical} \cite{fan2023automatic} and various approaches of reservoir computing \cite{10937022_resv,10.1145/2554797.2554827}.To the best of our knowledge, there is limited literature that explains the exact reactions to implement dual-rail encoded algorithms.

In this work we provide dual-rail implementation of linear regression and linear interpolation that can handle both positive and negative concentrations. While linear regression can be conceptually viewed as a single perceptron in a neural network, our approach expands on the exact computation of various steps in these techniques. In the process, we develop a generalized division module that can perform division of negative numbers. This is one of our main contributions. Furthermore, we introduce several computational tricks, such as combining the multiplication and addition module to reduce oscillations for computation. We note that primary focus of the paper is to provide theoretical implementation and to establish correctness using simulations and mathematical frameworks rather than focusing on physical implementation.

\textbf{Structure of the paper:} In Section~\ref{sec:reaction_networks}, we introduce reaction networks and the mass-action systems generated by them. 
In Section~\ref{sec:modules}, we illustrate reaction network implementation of different modules such as addition, multiplication, subtraction, comparison and approximate majority. We also make use of the \emph{dual-rail} encoding to handle both positive and negative values that may arise in our calculation.  
In Section~\ref{sec:div-module}, we present our novel, generalized division module for both positive and negative numbers. Using these modules, we discuss the algorithms univariate and multivariate linear regression, using direct  expression and gradient descent respectively in Section~\ref{sec:linear_regression}. In Section~\ref{sec:linear_interpolation}, we outline a reaction network module for implementing linear interpolation. For each of these methods, We test our reaction network modules through simulations on synthetic and standard datasets.

\section{Chemical Reaction Networks}
\label{sec:reaction_networks}

In this section, we recall some basic notions from reaction network theory. Towards this, we first define a chemical reaction network.

\begin{definition}[\cite{feinberg}]

A \textit{chemical reaction network} is a triple $\SCR$ where
\begin{itemize}
    \item $\mS = \{X_1, \dots, X_n \}$: a set of species
    \item $\mC = \{ C_1, \dots, C_s \}$: a set of complexes.
    \item $\mR = \{ R_1, \dots, R_r \}$: a set of reactions.
\end{itemize}

The network consists of reactions of the form
\begin{equation} \notag
R_j : \sum_{i=1}^n a_{ij} X_i \rightarrow \sum_{i=1}^n b_{ij} X_i
\ \text{ with } j = 1, \dots, r,
\end{equation}

\end{definition}

\begin{example} 
\label{ex:crnt} 

Consider the reaction network $\SCR$ shown in Figure \ref{fig:ex_crnt}. It consists of the following: 
\[
\mS = \{X_1, X_2, X_3 \},
\]
\[
\mC = \{ X_1 + X_2, \ 3X_2, \ 2 X_3 \},
\]  
and   
\[
\mR = \{ X_1 + X_2 \to 2 X_3, \quad  
X_1 + X_2 \to 3X_2, \quad  
3X_2 \to 2 X_3, \quad  
2 X_3 \to X_1 + X_2 \}.
\]
\begin{figure}[!ht] 
\begin{center}
\begin{tikzpicture}
		\node (1) at (-3,0) {$\bullet$};
		\node (3) at (0,2) {$\bullet$};
		\node (2) at (3,0) {$\bullet$};
		\node [left=1pt of 1] {$X_1 + X_2$};
		\node [right=1pt of 2] {$3X_2$};
		\node (x3) at (0,2.5) {$2 X_3$};
		\draw [{->}, -{stealth}, thick, transform canvas={yshift=2pt}] (1) -- (3) node [midway, above] {};
		\draw [{->}, -{stealth}, thick, transform canvas={yshift=-0pt}] (1) -- (2) node [midway, below] {};
		\draw [{->}, -{stealth}, thick, transform canvas={xshift=1.5pt}] (2) -- (3) node [midway, right=6pt] {};
		\draw [{->}, -{stealth}, thick, transform canvas={yshift=-2pt}] (3) -- (1) node [midway, left] {};
\end{tikzpicture}
\end{center}
\caption{A reaction network $\SCR$ has three species, three complexes, and four reactions.}
\label{fig:ex_crnt}
\end{figure}
\end{example}

\begin{definition}[\cite{feinberg,guldberg1864studies}]

For a reaction network $\SCR$, each reaction $R_j$ is associated with a \textit{reaction rate constant} $k_j > 0$ for $1 \leq j \leq r$. We denote by $\bk = (k_1, \dots, k_r) \in \mathbb{R}_{>0}^{r}$ the corresponding \textit{reaction rate vector}. The \textit{mass-action system} associated with $\SCRk$ is given by
\begin{equation} \notag
\frac{d x_i (t)}{dt} = \sum_{j=1}^r k_j \prod_{l=1}^n x_l^{\alpha_{lj}} (b_{ij} - a_{ij})
\ \text{ with } i = 1, \dots, n,
\end{equation}
where $\bx (t) = (x_1(t), \dots, x_n(t))$ denotes the concentrations of species $(X_1, \dots, X_n)$ at time $t$.
\end{definition}

For example, recall the reaction network in Example~\ref{ex:crnt}. Let $\bk = (k_1, \dots, k_4)$ denote the reaction rate vector. The reaction rates are assigned as follows:
\[
X_1 + X_2 \xrightarrow{k_1} 2 X_3 \qquad  
X_1 + X_2 \xrightarrow{k_2} 3 X_2 \qquad  
3X_2 \xrightarrow{k_3} 2 X_3 \qquad  
2 X_3 \xrightarrow{k_4} X_1 + X_2
\]
Under mass-action kinetics, the corresponding dynamical system is
\begin{equation} \notag
\begin{split}
\frac{\mathrm{d}\bx}{\mathrm{d} t}
& = k_{1} x_1 x_2 \begin{pmatrix} -1 \\ -1 \\ 0 \end{pmatrix}
+ k_{2} x_1 x_2 \begin{pmatrix} -1 \\ 2 \\ 0 \end{pmatrix} 
+ k_{3} x_2^3 \begin{pmatrix} 0 \\ -3 \\2 \end{pmatrix}
+ k_{4} x_3^2 \begin{pmatrix} 1 \\ 1 \\ -2 \end{pmatrix}
\\& = \begin{pmatrix} 
- k_{1} x_1 x_2 - k_{2} x_1 x_2 + k_{4} x_3^2 
\\ -k_{1} x_1 x_2 + 2 k_{2} x_1 x_2 - 3 k_{3} x_2^3 + k_{4} x_3^2
\\ 2 k_{3} x_2^3 - 2 k_{4} x_3^2
\end{pmatrix}.
\end{split}
\end{equation}

Reaction networks provide a robust framework for molecular-level computations, as the mass-action systems they generate have polynomial right-hand sides.

\section{Modules} 
\label{sec:modules}

In this section, we develop reaction network schemes for the individual submodules that will be used to implement the \emph{linear regression} and \emph{linear interpolation} steps in Sections~\ref{sec:linear_regression} and~\ref{sec:linear_interpolation}. Note that every module in this section appears in \cite{vasic2020crn++}. This also includes the concept of coupling modules to oscillators; however we use a different kind of oscillator called the Hopf oscillator.

We begin by introducing the \emph{dual-rail encoding} to accommodate negative values that may arise in the input data, weights, and biases.

\subsection{Dual-rail Encoding}

Physically, the concentration of any chemical species must be non-negative. This fundamental constraint poses a challenge for our implementation, as both the input data (data points) and the model parameters (weights and biases) may assume negative values.

To address this limitation, we adopt the dual-rail encoding paradigm \cite{10.1145/2554797.2554827}. Under this framework, each real-valued variable $\zeta \in \mathbb{R}$ is represented by the concentrations of two distinct species, $\zeta^+$ and $\zeta^-$, both of which are inherently non-negative:
\[
\zeta^+(t) \geq 0 \quad \text{and} \quad \zeta^-(t) \geq 0,
\ \text{ for all }
t \geq 0.
\]
The value of $\zeta$ is then defined as the difference between these concentrations,
\[
\zeta(t) = \zeta^+(t) - \zeta^-(t),
\ \text{ for all }
t \geq 0.
\]

In the following, we introduce several fundamental modules from \cite{vasic2020crn++}, which are utilized in the implementation of the linear regression and linear interpolation modules. We begin by introducing the notation used throughout the remainder of this paper.

\medskip

\textbf{Notation:}
Let $[X(t)]$ denote the concentration of species $X$ at time $t$. If $X$ is a catalyst, whose concentration remains constant for all $t \geq 0$, it is denoted by $[X]$. Finally, $[X]^{ss}$ denotes the steady-state concentration of $X$.

\subsection{Addition Module} \label{sec:addition}

The operation of addition is realized via the construction of a reaction network module, given by the following network:
\begin{equation} \label{module_addition}
\begin{split}
A &\overset{1}{\longrightarrow} A + C \\
B &\overset{1}{\longrightarrow} B + C \\
C &\overset{1}{\longrightarrow} \emptyset
\end{split}
\end{equation}

In this network, the input species $A$ and $B$ serve as catalysts and therefore maintain constant concentrations. Their concentrations are effectively added, with the resulting value stored in the steady-state concentration of $C$.

Suppose that $[A] = [A(0)]$ and $[B] = [B(0)]$. 
The dynamics of $C$ associated with the network \eqref{module_addition} is then given by
\begin{equation} \notag
\frac{d[C(t)]}{dt} = [A] + [B] - [C(t)].
\end{equation}

At the steady state, where $\frac{d[C(t)]}{dt} = 0$, we obtain
\[
[C]^{ss} = [A] + [B].
\]

In subsequent sections, this module will be referred to as \texttt{Sum}.

\subsection{Multiplication Module}
\label{sec:multiplication}

The operation of multiplication is realized via the construction of a reaction network module, given by the following network:
\begin{equation} \label{module_multiplication}
\begin{split}
A + B &\overset{1}{\longrightarrow} A + B + C \\
C &\overset{1}{\longrightarrow}\emptyset
\end{split}
\end{equation}

In this network, the input species $A$ and $B$ are catalysts. Their concentrations are multiplied, with the resulting value encoded in the steady-state concentration of $C$.

Suppose that $[A] = [A(0)]$ and $[B] = [B(0)]$. 
The dynamics of $C$ associated with the network \eqref{module_multiplication} is then given by
\begin{equation} \notag
\frac{d[C(t)]}{dt} = [A] [B] - [C(t)].
\end{equation}

At the steady state of $C$, we obtain
\[
[C]^{ss} = [A] [B].
\]
In subsequent sections, this module will be referenced as \texttt{Product}.

\begin{remark}
\label{rmk:sign_multiply}

In dual-rail encoding, each signed quantity is represented by two non-negative species. Specifically, $A$ and $B$ are encoded as $(A^+, A^-)$ and $(B^+, B^-)$, with $A=A^+ - A^-$ and $B=B^+ - B^-$. For multiplication, we have 
\[ 
A B = (A^+ - A^-) (B^+ - B^-) = A^+B^+ - A^+B^- - A^-B^+ + A^-B^-. 
\] 
Collecting the non-negative contributions yields 
\[ 
C^+ = A^+B^+ + A^-B^-, 
\qquad 
C^- = A^+B^- + A^-B^+. 
\] 
Therefore, the dual-rail outputs $C^+$ and $C^-$ can be constructed directly using four multiplication modules followed by two addition modules.
\end{remark}

\subsection{Combined Module for Multiplication and Addition} \label{sec:combined-mul-add}

In the computation of linear regression (see Section \ref{sec:linear_regression}), given the input species $\{ A_i \}^{n}_{j=1}$ and $\{ B_i \}^{n}_{j=1}$, we need to compute the sum of the products of their concentrations
\[
\sum_{i=1}^{n} [A_i] [B_i].
\]
This operation is realized via the construction of a reaction network module, given by the following network:
\begin{equation} \label{module_mul_add}
\begin{split}
A_1 + B_1 &\overset{1}\longrightarrow A_1 + B_1 + C \\
A_2 + B_2 &\overset{1}\longrightarrow A_2 + B_2 + C \\
& \vdots \qquad \qquad \vdots \\
A_n + B_n &\overset{1}\longrightarrow A_n + B_n + C \\
C &\overset{1}\longrightarrow \emptyset
\end{split}
\end{equation}

In this network, the input species $\{ A_i \}^{n}_{j=1}$ and $\{ B_i \}^{n}_{j=1}$ are catalysts. The sum of the products of their concentrations is encoded in the steady-state concentration of $C$.

Suppose that $[A_i] = [A_i (0)]$ and $[B_i] = [B_i(0)]$ for every $1 \leq i \leq n$. 
The dynamics of $C$ associated with the network \eqref{module_mul_add} is then given by
\begin{equation} \notag
\frac{d[C(t)]}{dt} = - [C(t)] + [A_1][B_1] + \cdots + [A_n][B_n].
\end{equation}
At the steady state of $C$, we obtain
\[
[C]^{ss} = \sum_{i=1}^{n} [A_i] [B_i].
\]

\subsection{Comparison and Approximate Majority Modules}

Our module is designed to compare the concentrations of two species and set the corresponding \emph{Boolean flag species}. This functionality is implemented through a \emph{comparison module} followed by an \emph{approximate majority module} in consecutive clock cycles.

\subsubsection*{Comparison Module}

The comparison module processes the input species $X$ and $Y$ by encoding their relative concentrations, in normalized form, into two flag species, $X_{gY}$ and $X_{lY}$.
The normalization procedure is realized via the construction of a reaction network module, given by the following network:
\begin{equation} \label{module_comparison}
\begin{split}
X_{gY} + Y &\overset{1}{\longrightarrow} X_{lY} + Y \\
X_{lY} + X &\overset{1}{\longrightarrow} X_{gY} + X
\end{split}
\end{equation}

In this network, the input species $X$ and $Y$ are catalysts. 

Suppose that $[X] = [X(0)]$ and $[Y] = [Y(0)]$. 
Furthermore, we normalize the flag species by choosing their initial concentrations such that
\[
[X_{gY} (0)] + [X_{lY} (0)] = 1.
\]
The dynamics of $X_{gY}$ and $X_{lY}$ associated with the network \eqref{module_comparison} are then given by
\begin{equation} \notag
\begin{split}
\frac{d[X_{gY}(t)]}{dt} &= -[X_{gY}(t)] [Y] + [X_{lY}(t)] [X], \\
\frac{d[X_{lY}(t)]}{dt} &= [X_{gY}(t)] [Y] - [X_{lY}(t)] [X].
\end{split}
\end{equation}

Summing the two equations above shows that the quantity $[X_{gY}(t)] + [X_{lY}(t)]$ is conserved for all $t \geq 0$. At the steady state of $X_{gY}$ and $X_{lY}$, we obtain
\begin{equation} \notag
\frac{[X_{gY}]^{ss}}{[X_{lY}]^{ss}} = \frac{[X]}{[Y]}.
\end{equation}
Since $[X_{lY}]^{ss} + [X_{gY}]^{ss} = [X_{gY} (0)] + [X_{lY} (0)] = 1$, it follows that
\begin{equation} \notag
[X_{gY}]^{ss} = \frac{[X]}{[X] + [Y]}, \qquad
[X_{lY}]^{ss} =\frac{[Y]}{[X] + [Y]}.
\end{equation}

Moreover, direct computation shows that this steady state is globally attracting. For instance, if $[X] = 75$ and $[Y] = 25$, the flag species $X_{gY}$ and $X_{lY}$ converge to $0.75$ and $0.25$, respectively.

In subsequent sections, this module will be referenced as \texttt{Compare}.

\subsubsection*{Approximate Majority Module}

The approximate majority module \cite{cardelli2012cell} converts molecules of the species with a lower concentration into molecules of the species with a higher concentration.
After obtaining the two flag species, $X_{gY}$ and $X_{lY}$, from the comparison module, the approximate majority module acts as a \textit{threshold function}, transforming the normalized output of the comparison module into a decisive, binary flag state. This state indicates which initial input ($X$ or $Y$) had the higher concentration.
This transformation is realized via the construction of a reaction network module, given by the following network:
\begin{equation} \label{module_approximate_majority}
\begin{split}
X_{gY} + X_{lY} &\overset{1}{\longrightarrow} X_{lY} + B \\
B + X_{lY} &\overset{1}{\longrightarrow} X_{lY} + X_{lY} \\
X_{lY} + X_{gY} &\overset{1}{\longrightarrow} X_{gY} + B \\
B + X_{gY} &\overset{1}{\longrightarrow} X_{gY} + X_{gY}
\end{split}
\end{equation}
In this network, the species $B$ functions as an intermediate species.

The dynamical system associated with the network \eqref{module_approximate_majority} is then given by
\begin{equation} \notag
\begin{split}
\frac{d[X_{gY}(t)]}{dt} &= -[X_{gY}(t)] [X_{lY}(t)] + [B(t)] [X_{gY}(t)], \\
\frac{d[X_{lY}(t)]}{dt} &= -[X_{gY}(t)] [X_{lY}(t)] + [B(t)] [X_{lY}(t)], \\
\frac{d[B(t)]}{dt} &= -[B(t)] [X_{lY}(t)] - [B(t)] [X_{gY}(t)] + 2[X_{gY}(t)][X_{lY}(t)].
\end{split}
\end{equation}

Assume that the normalization step in the comparison module has been executed, so that $[X_{gY}(t)] + [X_{lY}(t)] = 1$ for all $t \geq 0$. Then $(X_{gY}(t), X_{lY}(t), B(t))$ converges to $(1, 0, 0)$ if the initial condition satisfies $[X_{gY}(0)] > [X_{lY}(0)]$, and to $(0, 1, 0)$ if $[X_{gY}(0)] < [X_{lY}(0)]$. In particular,
\begin{itemize}
    \item If $[X_{gY}(0)] > [X_{lY}(0)]$, the steady state is $[X_{gY}]^{ss} = 1$ and $[X_{lY}]^{ss} = 0$.
    \item If $[X_{gY}(0)] < [X_{lY}(0)]$, the steady state is $[X_{gY}]^{ss} = 0$ and $[X_{lY}]^{ss} = 1$.
\end{itemize}
We refer to \cite{vasic2020crn++} for a proof of this result.

\subsection{Subtraction Module} \label{sec:subtraction}

The subtraction module evaluates the non-negative difference between the concentrations of two input species $A$ and $B$, namely $\max(0, [A] - [B])$.
This operation is realized via the construction of a reaction network module, given by the following network:
\begin{equation} \label{module_subtraction}
\begin{split}
A &\overset{1}{\longrightarrow} A + C \\
B &\overset{1}{\longrightarrow} B + H \\
C &\overset{1}{\longrightarrow} \emptyset\\
C + H &\overset{1}{\longrightarrow} \emptyset\\
\end{split}
\end{equation}

In this network, the input species $A$ and $B$ are catalysts, and the species $H$ functions as an intermediate inhibitor.
The non-negative difference between their concentrations is encoded in the (steady-state) concentration of $C$.

Suppose that $[A] = [A(0)]$ and $[B] = [B(0)]$.
The dynamical system associated with the network \eqref{module_subtraction} is then given by
\begin{equation} \notag
\begin{split}
\frac{d[C(t)]}{dt} &= [A] - [C(t)][H(t)] - [C(t)], \\
\frac{d[H(t)]}{dt} &= [B] - [C(t)][H(t)].
\end{split}
\end{equation}

Then $C(t)$ converges to $0$ if the initial condition satisfies $[A] \leq [B]$, and to the steady state $[A] - [B]$ if $[A] > [B]$.

This result confirms that the network implements the non-negative subtraction function $\max(0, [A] - [B])$.
In subsequent sections, this module will be referenced as \texttt{Sub}.

\subsection{Loading Module} \label{sec:loading}

The loading module transfers the concentration of species $A$ into species $B$. 
This operation is realized via the construction of a reaction network module, given by the following network:
\begin{equation} 
\label{module_loading}
\begin{split}
A &\overset{1}{\longrightarrow} A + B \\
B &\overset{1}{\longrightarrow} \emptyset
\end{split}
\end{equation}

In this network, the species $A$ is a catalyst.

Suppose that $[A] = [A(0)]$. 
The dynamics of $B$ associated with the network \eqref{module_loading} is then given by
\begin{equation} \notag
\frac{d[B (t)]}{dt} = [A] - [B].
\end{equation}

At the steady state of $B$, we obtain
\[
[B]^{ss} = [A].
\]

\subsection{Oscillation Module}

For our framework, we use the Hopf oscillator as described in~\cite{choudhary2025implementationsupportvectormachines}. The use of oscillation modules as in~\cite{vasic2020crn++} (ring oscillator) is possible, however these oscillations  are not guaranteed to be perpetual. To overcome this, we use the Hopf oscillator (that converges to a limit cycle), where we assume that there exists an external measurement device that discretizes the dynamics of the Hopf oscillator into a time-signal that activates only one group of reactions at a time. 

The dynamics generated by the Hopf system~\cite{kuznetsov1998elements,strogatz2024nonlinear}
is given by
\begin{equation}
\label{oscillation_hopf_dynamics}
\begin{split}
\frac{d [X(t)]}{dt} &= \mu [X(t)] - [X(t)]^3 - [Y(t)]^2 [X(t)] - \omega [Y(t)], \\
\frac{d [Y(t)]}{dt} &= \mu [Y(t)] - [Y(t)]^3 - [X(t)]^2 [Y(t)] + \omega [X(t)].
\end{split}
\end{equation}

Note that in the dual-rail encoding, this dynamics can be realized using mass-action kinetics.

 The external readout device computes the following:

\begin{equation} \label{eq:phase_to_bin}
k(t) = \left\lfloor \frac{n_{\texttt{clock}}}{2\pi} atan 2([Y(t)],[X(t)]) \right\rfloor \bmod n_{\texttt{clock}},
\end{equation}

where $n_{\texttt{clock}}$ is the number of clock slots. 

Then 
$O(t) \in \{0, 1\}^{n_{\texttt{clock}}}$ is defined ass
\[
O_i(t) := 
\begin{cases}
1, & \text{ if } i = k(t), \\[5pt]
0, & \text{ otherwise}. 
\end{cases}
\]
for $i \in \{0, \dots, n_{\texttt{clock}} - 1\}$.

The binary nature of the oscillation molecule ensures that only certain sets of reactions get activated at a time. We exploit this in our framework to enforce sequential execution of reactions.

\section{Division Module}
\label{sec:div-module}
In this section, we first recall the division module \cite{cimb44040119}. However, this module is restricted to positive values and therefore cannot directly accommodate negative inputs under the dual-rail encoding framework. To overcome this limitation, we introduce a generalized division module that supports arbitrary-sign real values within the dual-rail encoding representation. This is one of our main contributions. We first describe the usual division module.

\subsection{Division Module for Positive Concentrations}
\label{sec:pos-div-module}

For given positive concentrations $[A]$ and $[B]$, the division module evaluates the ratio between the concentrations of the input species $A$ and $B$.
This operation is realized via the construction of a reaction network module, given by the following network:
\begin{equation} \label{module_division}
\begin{split}
A &\overset{1}{\longrightarrow} A + C \\
B + C &\overset{1}{\longrightarrow} B
\end{split}
\end{equation}
In this network, the input species $A$ and $B$ are catalysts.
The ratio between their concentrations is encoded in the steady-state concentration of $C$.

Suppose that $[A] = [A(0)]$ and $[B] = [B(0)]$.
The dynamics of $C$ associated with the network \eqref{module_division} is then given by
\begin{equation} \notag
\frac{d[C (t)]}{dt} = [A] - [B][C].
\end{equation}
At the steady state of $C$, we obtain
\[
[C]^{ss} = \frac{[A]}{[B]}.
\]
In subsequent sections, this module will be referenced as \texttt{Abs\_div}.

\subsection{Generalized Division Module}
\label{sec:generalized-div}

Unlike multiplication between signed quantities (see Remark~\ref{rmk:sign_multiply}), division involves the ratio
\[
\frac{A}{B} = \frac{A^+ -A^-}{B^+ - B^-},
\]
which cannot be expressed as a combination of non-negative product-and-sum terms in $(A^+, A^-)$ and $(B^+,B^-)$.

To address this difficulty, we adopt the following algorithm. First, we simultaneously compute the magnitudes $|A| = |A^+ - A^-|$ and $|B| = |B^+ - B^-|$ while initializing their sign indicators. We then apply the division module $\texttt{Abs\_div} (|A|, |B|)$ to obtain $|C|$. Finally, the dual-rail outputs $(C^+,C^-)$ are assigned.

\subsubsection*{Algorithm for Generalized Division}

\begin{algorithm}[H]
\caption{Generalized Division Algorithm} \label{alg:division}
\begin{algorithmic}[1]
\Procedure{Magnitude}{$X^+, X^-$}
    \State $mag \gets \text{Sum} (\text{Sub} (X^+, X^-), \text{ Sub} (X^-, X^+))$
    \State \Return $mag$
\EndProcedure

\State $mag\_A \gets \text{Magnitude}(A^+, A^-)$
\State $mag\_B \gets \text{Magnitude}(B^+, B^-)$
\State $magC \gets \text{Abs\_div}(mag\_A, mag\_B)$

\Comment{Recovering Signs: $C_{pos}$ is 1 if $\frac{A}{B}$ is positive}
\State $A_{pos}, A_{neg} \gets \text{Compare}(A^+, A^-)$
\State $B_{pos}, B_{neg} \gets \text{Compare}(B^+, B^-)$
\State $C_{pos} \gets \text{Sum}(\text{Product}(A_{pos}, B_{pos}), \text{Product}(A_{neg}, B_{neg}))$
\State $C_{neg} \gets \text{Sum}(\text{Product}(A_{pos}, B_{neg}), \text{Product}(A_{neg}, B_{pos}))$

\Comment{Assigning Sign}
\State $C^+ \gets \text{Product}(magC, C_{pos})$
\State $C^- \gets \text{Product}(magC, 1 - C_{pos})$

\end{algorithmic}
\end{algorithm}

\begin{proposition}[Correctness of the Generalized Division Algorithm]

Let $A$ and $B$ be signed quantities with $B \neq 0$. 
Then the generalized division algorithm produces output species $C^+$ and $C^-$ satisfying
\[
C^+ - C^- = \frac{A}{B}.
\]
\end{proposition}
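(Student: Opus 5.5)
We prove the identity at the level of steady states, running the algorithm line by line and using the steady-state characterizations of the modules in Section~\ref{sec:modules} and Section~\ref{sec:pos-div-module}. We assume, as the paper does, that each clock slot of the Hopf oscillator lasts long enough for the module active in it to reach its steady state before the next slot begins. Write $a = A^+ - A^-$ and $b = B^+ - B^-$, with $b \neq 0$. We also assume $a \neq 0$, or treat $a = 0$ separately as noted below.

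\emph{Step 1 (magnitudes).} By the result for \texttt{Sub}, the steady states are $\max(0, A^+ - A^-)$ and $\max(0, A^- - A^+)$. At least one of these is zero, so by \texttt{Sum} we get $mag\_A = |a|$. In the same way $mag\_B = |b| > 0$. Since the divisor is positive, the \texttt{Abs\_div} steady state $[C]^{ss} = [A]/[B]$ applies and gives $magC = |a|/|b|$.

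\emph{Step 2 (signs).} Here \texttt{Compare} is understood to be followed by the approximate majority module, as the paper prescribes. The comparison steady state is $[A_{pos}] = A^+/(A^+ + A^-)$, which exceeds $1/2$ exactly when $a > 0$. The approximate majority result then drives $(A_{pos}, A_{neg})$ to $(1,0)$ if $a > 0$ and to $(0,1)$ if $a < 0$. The same holds for $B$, using $b \neq 0$. Checking the four cases with the \texttt{Product} and \texttt{Sum} steady states gives $C_{pos} = 1$ and $C_{neg} = 0$ exactly when $ab > 0$, and the reverse when $ab < 0$. Hence $C_{pos} \in \{0,1\}$, with $C_{pos} = 1$ if and only if $\operatorname{sign}(a) = \operatorname{sign}(b)$, and $C_{neg} = 1 - C_{pos}$.

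\emph{Step 3 (assembly).} By \texttt{Product}, $C^+ = magC \cdot C_{pos}$ and $C^- = magC\,(1 - C_{pos})$. Therefore
\[
C^+ - C^- = \frac{|a|}{|b|}\,(2C_{pos} - 1) = \frac{|a|}{|b|}\operatorname{sign}(ab) = \frac{a}{b}.
\]
The quantity $1 - C_{pos}$ can be realized as $C_{neg}$, or as \texttt{Sub} applied to a unit catalyst and $C_{pos}$; both give the same steady state.

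The main obstacle is Step 2 in two degenerate situations. The first is $a = 0$, i.e.\ $A^+ = A^-$. There the comparison gives $1/2$ and approximate majority has no decisive winner, so $C_{pos}$ may be undetermined. This is harmless, because $magC = 0$ and therefore $C^+ = C^- = 0 = a/b$ whatever the value of $C_{pos}$; the Step 3 formula then holds trivially. The case $A^+ = A^- = 0$, where \texttt{Compare} is undefined, is handled the same way. The second is the justification that each module reaches its steady state before the next one reads its output, in particular that the catalysts really are constant during each slot. This rests on the sequential clocking assumption and not on anything proved in the paper, so the proposition is stated and proved under that idealization.
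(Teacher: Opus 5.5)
Your proposal is correct and follows essentially the same route as the paper's proof. The shared steps are: magnitudes via \texttt{Sub} and \texttt{Sum}; \texttt{Abs\_div} applied to $|A|$ and $|B|$; Boolean sign indicators from \texttt{Compare} followed by approximate majority, combined through products and sums; and the final assembly $C^+ = magC\cdot C_{pos}$, $C^- = magC\,(1-C_{pos})$. Your explicit handling of the tie $A^+ = A^-$ is a small improvement over the paper. The paper simply declares $A_{pos}=0$ there, even though approximate majority is not decisive in that case, whereas you correctly observe that $magC=0$ makes the output $0 = A/B$ regardless of $C_{pos}$.
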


\begin{proof}

Suppose $A = A^+ - A^-$ and $B = B^+ - B^-$ under dual-rail encoding, with $A^+, A^-, B^+, B^- \geq 0$ and $B \neq 0$.
The algorithm proceeds in four steps.

\textbf{Step 1.}  
In lines 1-6, we compute the magnitudes $|A|$ and $|B|$.
For any dual-rail encoded variable $(X^+, X^-)$,
\[
| X | = \underbrace{\texttt{Sub}(X^+, X^-)}_{\max(0, X^+ - X^-)} 
\ + \
\underbrace{\texttt{Sub}(X^-, X^+)}_{\max(0, X^- - X^+)}.
\]
Applying this to the inputs yields the magnitudes $|A|$ and $|B|$.

\textbf{Step 2.}  
In line 7, since $|A| \ge 0$ and $|B| > 0$, we apply the division module to produce
\[
magC := |C| = \texttt{Abs\_div}(|A|, |B|) = \frac{|A|}{|B|}.
\]

\textbf{Step 3.}  
In lines 8-11, we first employ the compare module to determine the sign of $A$ and $B$. We directly obtain boolean indicators $A_{pos}$ and $A_{neg}$ through two stage reaction process (with identical parallel process for $B$). 
First, we initialize competing indicator species (initially set to equal concentration $0.5$) using $A^+$ and $A^-$:
\begin{align*}
    A^+ + A_{neg} &\overset{1}{\longrightarrow} A^+ + A_{pos} \\
    A^- + A_{pos} &\overset{1}{\longrightarrow} A^- + A_{neg}
\end{align*}
This pushes the ratio of $A_{pos}$ to $A_{neg}$ in the direction of the dominant input($A^+$ or $A^-$). Next we use Approxiamate Majority (AM) reaction to amplify this difference. Introducing a transient helper species $U_A$ the AM netowork is as follows:
\begin{align*}
    A_{pos} + A_{neg} &\overset{1}{\longrightarrow} A_{neg} + U_A \\
    U_A + A_{neg} &\overset{1}{\longrightarrow} 2A_{neg} \\
    A_{neg} + A_{pos} &\overset{1}{\longrightarrow} A_{pos} + U_A \\
    U_A + A_{pos} &\overset{1}{\longrightarrow} 2A_{pos}
\end{align*}
Consequently, steady state concentration yiled boolean values: 
\[
A_{pos} := \texttt{Compare}(A^+, A^-) = 
\begin{cases}
1, & \text{if } A^+ > A^- \quad(\text{$A$ is positive}), \\[5pt]
0, & \text{otherwise } \quad(\text{$A$ is non-positive}).
\end{cases}
\]
and similarly for $A_{\textbf{neg}}$ is ($1 - A_{pos}$) opposite of $A_{pos}$. An identical, parallel network yields indicators $B_{pos}$ and $B_{neg}$

We next determine the sign of $C$ using combined multiplication and addition (Section~\ref{sec:combined-mul-add}) .
The sign of $A/B$ is positive if and only if $A$ and $B$ share the same sign. Accordingly, we define
\[
C_{\mathrm{pos}} = 
\underbrace{\texttt{Product}(A_{pos}, B_{pos})}_{\texttt{Both\_positive}} 
\ + \ 
\underbrace{\texttt{Product}(A_{neg}, B_{neg})}_{\texttt{Both\_negative}},
\]
which evaluates to 1 precisely when $A$ and $B$ have the same sign, and 0 otherwise. A similar mirrored logic for $C_{neg}$

\textbf{Step 4.}  
The final dual-rail outputs are assigned as
\[
(C^+, C^-) = (\texttt{Product}(magC, C_{pos}), \ \texttt{Product}(magC, 1 - C_{pos})).
\]
Thus, if $C_{pos} = 1$, we have $(C^+, C^-) = (magC, 0)$; otherwise, if $C_{pos} = 0$, we have $(C^+, C^-) = (0, magC)$.

Combining the above steps, the algorithm correctly computes the dual-rail encoded quotient for arbitrary signed inputs.
\end{proof}

In subsequent sections, this module will be referenced as \texttt{Generalized Division}.
Specifically, to compute $C = \frac{A}{B}$ from dual-rail representations $(A^+, A^-)$ and $(B^+, B^-)$ of the numerator and a nonzero denominator, respectively, we denote the resulting quotient $(C^+, C^-)$ by
\[
(C^+, C^-) = \texttt{Generalized Division} \big( A^+,A^-,B^+,B^- \big).
\]

Figure \ref{fig:division_example} illustrates evolution of species concentrations alongside the governing clock oscillations for an example computation $A = -8$ $(A^+ = 2, A^- = 10)$ and $B = 2$ $(B^+ = 4, B^- = 2)$.

\begin{figure}[H]
    \centering
    \includegraphics[width=0.9\columnwidth]{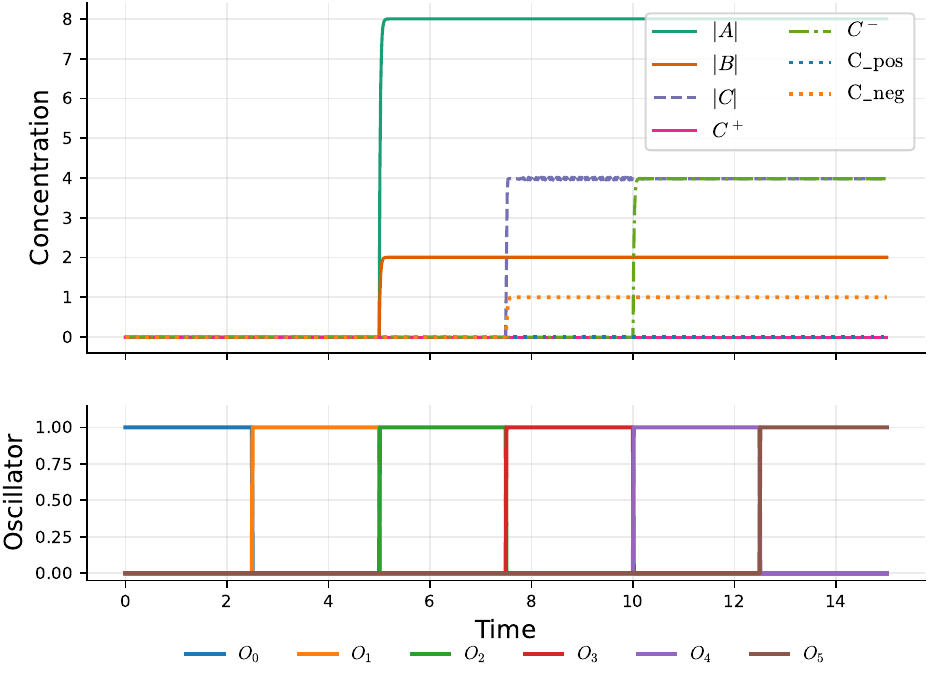}
    \caption{Generalised division module evaluating A = -8 and B = 2. (Top) The concentration evolution of species. (Bottom) The sequential execution of underlying Hopf oscillator clock phases.}
\label{fig:division_example}
\end{figure}

\paragraph{Module Oscillation Count}

The overall module execution time is determined by the longest sequential pathway, measured in terms of oscillations required for module completion. By parallelizing independent operations, the reaction completes in 4 oscillations.
\begin{enumerate}
    \item \textbf{Oscillation 1: } Partial magnitude subtractions ($A_{sub}^+, A_{sub}^-$) and initial sign comparisons ($A_{pos}, A_{neg}$) are evaluated in parallel (1 oscillation).
    
    \item \textbf{Oscillation 2: } Partial magnitudes are summed into magnitudes (mag\_A and mag\_B), while the Approximate Majority module makes the sign Boolean exact Boolean states (1 oscillation).
    
    \item \textbf{Oscillation 3: } The $\texttt{Abs\_div}$ module computes the magnitude ratio ($\lvert C \rvert$) while the sign terms ($C_{pos}, C_{neg}$) are simultaneously evaluated using combined multiplication and addition. (1 oscillation).
    
    \item \textbf{Oscillation 4: } Finally we assign magC to $C^+$ and $C^-$  based on evaluated sign variable (1 oscillation).
\end{enumerate}
The total execution time for the generalized division module is 4 oscillations.

\section{Linear Regression via CRN}
\label{sec:linear_regression}

Linear regression is a supervised learning method in which we seek a function that best fits a set of data points.
In univariate linear regression, given data pairs $\{ (x_i, y_i) \}^n_{i=1}$ with $x_i, y_i \in \mathbb{R}$, the goal is to determine parameters $w$ and $b$ (the slope and intercept) such that $y \approx wx + b$ for all observations.
In multivariate linear regression with $d$ features, the data take the form $\{ (\mathbf{x}_i, y_i) \}^n_{i=1}$ with $\mathbf{x}_i \in \mathbb{R}^d$ and $y_i \in \mathbb{R}$. The objective is to determine a parameter vector $\mathbf{W} = (W_1, \dots, W_d)$ and intercept $b$ such that $y \approx \mathbf{W} \cdot x + b$, thereby fitting the data.

In Section \ref{sec:uni_regression}, we apply the CRN method to implement univariate linear regression. In Section \ref{sec:multi_regression}, we extend the CRN scheme to train a multivariate linear regression model using gradient descent.

By a slight abuse of notation, throughout this section, we use the same symbols (e.g., $x_i^+$) to denote both the species and their corresponding concentrations.

\subsection{Univariate Linear Regression Using CRN}
\label{sec:uni_regression}

In this section, we construct a chemical reaction network (CRN) that computes the solution to univariate linear regression through a sequence of oscillation-controlled reactions.
Given data pairs $\{ (x_i, y_i) \}^n_{i=1}$ with $x_i, y_i \in \mathbb{R}$, we seek parameters $(w, b)$ such that $y \approx wx + b$. This is typically achieved by minimizing the mean squared error (MSE), defined as
\[
\mathrm{MSE}(w,b) = \sum_{i=1}^{n} (y_i - (w x_i + b))^2.
\]
From \cite{kenney1962linear}, the MSE is minimized by the following optimal slope $\widehat{w}$ and intercept $\widehat{b}$:
\begin{equation} \label{eq:univariate_sol}
\widehat{w}
= \frac{n ( \sum_{i=1}^{n} x_i y_i ) - ( \sum_{i=1}^{n} x_i ) ( \sum_{i=1}^{n} y_i )}{n ( \sum_{i=1}^{n} x_i^2 ) - ( \sum_{i=1}^{n} x_i )^2},
\qquad
\widehat{b}
= \frac{1}{n} ( \sum_{i=1}^{n} y_i ) - \widehat{w} \ \frac{1}{n} ( \sum_{i=1}^{n} x_i ).
\end{equation}

We now show that the optimal slope $\widehat{w}$ and intercept $\widehat{b}$ in \eqref{eq:univariate_sol} can be computed using the modules described in Sections~\ref{sec:modules} and~\ref{sec:div-module}. To this end, we first introduce the following notation:
\begin{equation} \label{def:notation_1}
\Sigma_x = \sum_{i=1}^{n} x_i, \quad
\Sigma_y = \sum_{i=1}^{n} y_i, \quad
\Sigma_{xx} = \sum_{i=1}^{n} x_i^2, \quad
\Sigma_{xy} = \sum_{i=1}^{n} x_i y_i.
\end{equation}

\paragraph{Implementation}

We describe a chemical reaction network (CRN) that computes the least-squares estimators using dual-rail encoding.

\begin{enumerate}
\item \textbf{Dual-rail encoding.}
Each input is represented in dual-rail form as
\[
x_i = x_i^+ - x_i^-, 
\qquad 
y_i = y_i^+ - y_i^-.
\]

\item \textbf{Computation of linear and bilinear sums.}
We compute the aggregate quantities \eqref{def:notation_1} in dual-rail form. Here, we show the implementation of $\Sigma_{xy}$ and $\Sigma_x$.

\smallskip

$(a)$ For $\Sigma_{xy}$, we rewrite it as
\[
\Sigma_{xy} = \underbrace{\sum_{i=1}^n (x_i^+ y_i^+ + x_i^- y_i^-)}_{:= \Sigma_{xy}^+} \ - \ \underbrace{\sum_{i=1}^n (x_i^+ y_i^- + x_i^- y_i^+)}_{:= \Sigma_{xy}^-}.
\]
To compute the positive component $\Sigma_{xy}^+$, we implement the combined module for multiplication and addition (Section~\ref{sec:combined-mul-add}) via the following network:
\begin{equation} \notag
\begin{split}
x_1^+ + y_1^+ + O_1 & \overset{1}\longrightarrow x_1^+ + y_1^+ + \Sigma_{xy}^+ +O_1 
\\ x_1^- + y_1^- + O_1 &\overset{1}\longrightarrow x_1^- + y_1^- + \Sigma_{xy}^+ + O_1 
\\& \vdots 
\\ x_n^+ + y_n^+ + O_1 &\overset{1}\longrightarrow x_n^+ + y_n^+ + \Sigma_{xy}^+ + O_1
\\ x_n^- + y_n^- + O_1 &\overset{1}\longrightarrow x_n^- + y_n^- + \Sigma_{xy}^+ + O_1
\\ \Sigma_{xy}^+ + O_1&\overset{1}\longrightarrow O_1
\end{split}
\end{equation}
Here, the oscillating species $O_1$ induces a time-scale separation corresponding to the active computation phase of the module.
In this network, the input species $\{ x_i^{\pm} \}^{n}_{j=1}$, $\{ y_i^{\pm} \}^{n}_{j=1}$, and $O_1$ are catalysts.
Thus, we derive
\[
\frac{d[\Sigma_{xy}^+ (t)]}{dt} = \big( [x_1^+][y_1^+] + \cdots + [x_n^+][y_n^+] + [x_1^-][y_1^-] + \cdots +[x_n^-][y_n^-] - [\Sigma_{xy}^+ (t)] \big) [O_1].
\]
Note that $[O_1] > 0$ during the computation phase. Hence, the steady state of $\Sigma_{xy}^+$ satisfies that
\[
[\Sigma_{xy}^+]^{ss} = \sum_{i=1}^n ([x_i^+] [y_i^+] + [x_i^-] [y_i^-]). 
\]
The network for the negative component $\Sigma_{xy}^-$ is constructed analogously and in parallel.

\smallskip

$(b)$ For $\Sigma_{x}$, we implement the addition module (Section \ref{sec:addition}) during the $O_1$ phase via the following network:
\begin{equation} \notag
\begin{split}
x_1^+ + O_1 & \overset{1}\longrightarrow x_1^+ + \Sigma_x^+ +O_1\\
x_2^+ + O_1 & \overset{1}\longrightarrow x_2^+ + \Sigma_x^+ +O_1
\\& \vdots 
\\ x_n^+ + O_1 & \overset{1}\longrightarrow x_n^+ + \Sigma_x^+ +O_1
\end{split}
\end{equation}
In this network, the input species $\{ x_i^{+} \}^{n}_{j=1}$ and $O_1$ are catalysts. Thus, we derive
\[
\frac{d[\Sigma_{x}^+ (t)]}{dt} = \bigl([x_1^+]+ [x_2^+]+ \dots +[x_n^+]] \bigr) [O_1].
\]
Since $[O_1] > 0$ during the computation phase, the steady state of $\Sigma_{x}^+$ follows
\[
[\Sigma_{x}^+]^{ss} = \sum_{i=1}^n [x_i^+] . 
\]
The networks corresponding to the negative component $\Sigma_x^-$, as well as those for $\Sigma_y$ and $\Sigma_{xx}$, are constructed analogously and operate in parallel.

\item \textbf{Computation of products of aggregate sums.}

We now compute the products of aggregate quantities:
\[
P_1 := n \Sigma_{xy},
\qquad 
P_2 := \Sigma_x \Sigma_y,
\qquad 
P_3 := n \Sigma_{xx},
\qquad 
P_4 := (\Sigma_x)^2,
\]
in dual-rail form.
Here, we show the implementation of $P_2 = \Sigma_x \Sigma_y$.
We rewrite $P_2$ as
\[
P_2 = (\Sigma_x^{+} - \Sigma_x^{-})(\Sigma_y^{+} - \Sigma_y^{-})
= \underbrace{(\Sigma_x^{+}\Sigma_y^{+} + \Sigma_x^{-}\Sigma_y^{-} )}_{:= P_2^{+}}
- \underbrace{(\Sigma_x^{+}\Sigma_y^{-} + \Sigma_x^{-}\Sigma_y^{+} )}_{:= P_2^{-}}.
\]

For the positive component $P_2^{+}$, we again implement the combined module for multiplication and addition (Section~\ref{sec:combined-mul-add}) via the following network:
\begin{equation} \notag
\begin{split}
\Sigma_x^+ + \Sigma_y^+ + O_2 &\overset{1}{\longrightarrow} \Sigma_x^+ + \Sigma_y^+ + P_2^+ + O_2 
\\ \Sigma_x^- + \Sigma_y^- + O_2 &\overset{1}{\longrightarrow} \Sigma_x^- + \Sigma_y^- + P_2^+ + O_2 
\\ \Sigma_x^+ + \Sigma_y^- + O_2 &\overset{1}{\longrightarrow} \Sigma_x^+ + \Sigma_y^- + P_2^- + O_2 
\\ \Sigma_x^- + \Sigma_y^+ + O_2 &\overset{1}{\longrightarrow} \Sigma_x^- + \Sigma_y^+ + P_2^- + O_2 
\\ P_2^+ + O_2 &\overset{1}{\longrightarrow} O_2 
\\ P_2^- + O_2 &\overset{1}{\longrightarrow} O_2
\end{split}
\end{equation}
Under mass-action kinetics, the above network ensures that, during the $O_2$-active phase, the steady state of $P_2^+$ is
\[
[P_2^+]^{ss} = \sum_{i=1}^n [x_i^+] \sum_{i=1}^n [y_i^+] + \sum_{i=1}^n [x_i^-] \sum_{i=1}^n [y_i^-]. 
\]
The networks corresponding to the negative component $P_2^-$, as well as those for $P_1$, $P_3$, and $P_4$, are constructed analogously and operate during the $O_2$-active phase.

\item \textbf{Computation of the optimal slope $\widehat{w}$.}

From \eqref{eq:univariate_sol}, we have 
\begin{equation} \label{eq:hatw}
\widehat{w} = \frac{ P_1 - P_2}{P_3 - P_4}.
\end{equation}
Here $P_1$, $P_2$, $P_3$ and $P_4$ are the product terms computed in the previous step.
Moreover, the numerator and denominator on the right-hand side follow that
\begin{equation} \notag
\begin{split}
P_1 - P_2 & = \underbrace{(P_1^+ + P_2^-)}_{:= Num^+} \ - \ \underbrace{(P_1^- + P_2^+)}_{:= Num^-},
\quad
P_3 - P_4  = \underbrace{(P_3^+ + P_4^-)}_{:= Den^+} \ - \ \underbrace{(P_3^- + P_4^+)}_{:= Den^-}.
\end{split}
\end{equation}

For $Num^+ = P_1^+ + P_2^-$, we implement the addition module (Section \ref{sec:addition}) during the $O_3$ phase via the following network:
\begin{equation} \notag
\begin{split}
P_1^+ + O_3 &\overset{1}\longrightarrow P_1^+ + Num^+ + O_3 
\\ P_2^- + O_3 &\overset{1}\longrightarrow P_2^- + Num^+ + O_3 
\\ Num^+ + O_3 &\overset{1}\longrightarrow O_3
\end{split}
\end{equation}
Here, we introduce the oscillating species $O_3$ to ensure that this operation is activated only after the $O_1$-phase.
By direct computation, during the $O_3$-active phase, the steady state of $Num^+$ is
\[
[Num^+]^{ss} = [P_1^+] + [P_2^-].
\]
The network for the negative component $Num^-$, as well as those for $Den^+$ and $Den^-$, are constructed analogously and operate during the $O_3$-active phase.

After constructing the dual-rail forms of the numerator and denominator in \eqref{eq:hatw}, we apply the generalized division module (Section~\ref{sec:generalized-div}) to obtain that
\[
(\widehat{w}^+, \widehat{w}^-)
= \texttt{Generalized Division} \big( \mathrm{Num}^+,\mathrm{Num}^-,\mathrm{Den}^+,\mathrm{Den}^- \big).
\]

\item \textbf{Computation of the optimal intercept $\widehat{b}$.}

From \eqref{eq:univariate_sol}, the optimal intercept $\widehat{b}$ follows that
\[
\widehat{b} = \underbrace{ \frac{1}{n^2} ( n \Sigma_y^{+} + \widehat{w}^- \Sigma_x^{+} + \widehat{w}^+ \Sigma_x^{-}) }_{:= \widehat{b}^{+}}
\ - \ \underbrace{ \frac{1}{n^2} ( n \Sigma_y^{-} + \widehat{w}^+ \Sigma_x^{+} + \widehat{w}^- \Sigma_x^{-} )}_{:= \widehat{b}^{-}}.
\]

For $\widehat{b}^{+} = \frac{1}{n^2} ( n \Sigma_y^{+} + \widehat{w}^- \Sigma_x^{+} + \widehat{w}^+ \Sigma_x^{-})$, the CRN realizes the expression in two steps.

First, during the $O_5$-active phase, we implement the combined module for multiplication and addition (Section~\ref{sec:combined-mul-add}) to compute 
\[
\widehat{w}^- \Sigma_x^{+} + \widehat{w}^+ \Sigma_x^{-}.
\]
Second, during the $O_7$-active phase, we implement the addition module (Section \ref{sec:addition}) to realize 
\[
n \Sigma_y^{+} + \widehat{w}^- \Sigma_x^{+} + \widehat{w}^+ \Sigma_x^{-}.
\]
The networks for the negative component $\widehat{b}^-$ are constructed analogously and operate during the $O_8$ and $O_9$-active phase.
Figure~\ref{fig:univariat_schematic}  represents flowchart for univariate linear regression.

\end{enumerate}
To visualise the above steps, Figure~\ref{fig:univariat_schematic} shows a flowchart for univariate linear regression. We verify our results with simulations in Example~\ref{ex:univarite}.

\begin{figure}[H]
     \hspace{5cm}
    \includegraphics[width=0.6\columnwidth]{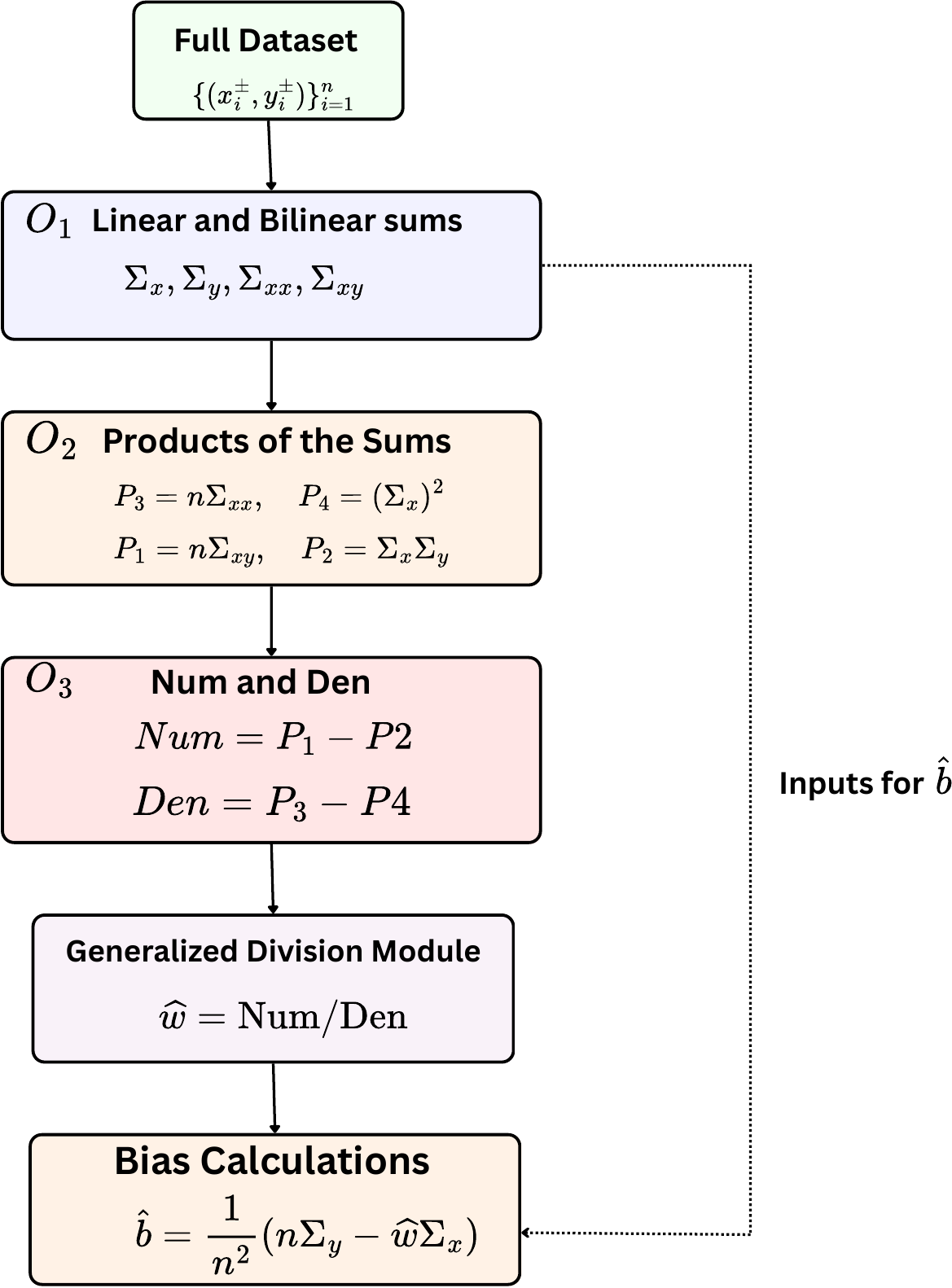}
    \caption{Chemical Reaction Network Flow for Univariate Linear Regression.}
    \label{fig:univariat_schematic}
\end{figure}

\begin{example}\label{ex:univarite}
Comparison of above proposed CRN-based Linear Regression and the actual fit line using Python in Figure~\ref{fig:univariate_plot}. To verify, we generated 40 points. The input features ($x$) were sampled uniformly and target values ($y$) were generated using relationship $y  = 2.5x - 1 + \epsilon$, where $\epsilon$ is gaussian noise.
\begin{figure}[H]
    \centering
    \includegraphics[width=0.5\linewidth]{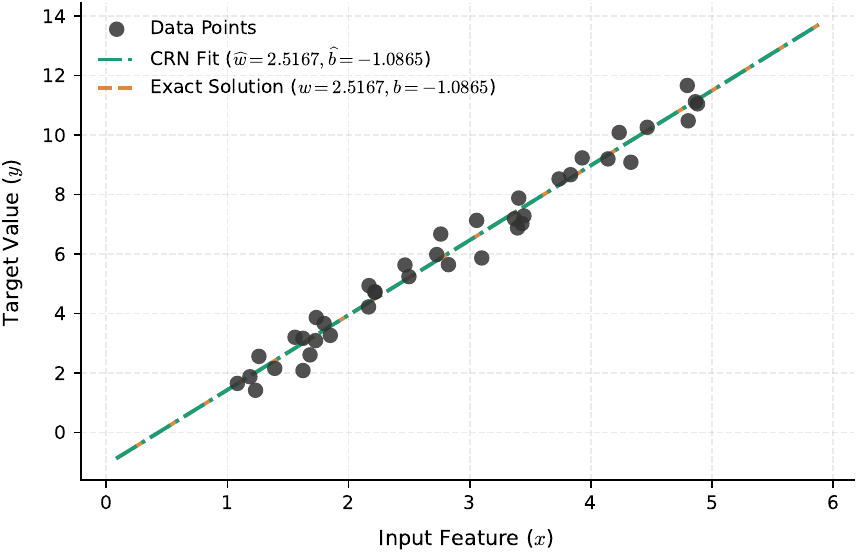}
    \caption{The scatter points represents 40 synthetic data points. CRN accurately computes the optimal slope ($\widehat{w}=2.5167$) and intercept ($\widehat{b}=-1.0865$), plotted as the green-dashed dotted line. The CRN computed line overlaps the python (orange-dashed line) solution.}
    % can compare 
    \label{fig:univariate_plot}
\end{figure}
\end{example}

\subsection{Multivariate Linear Regression Using Gradient Descent}
\label{sec:multi_regression}

Motivated by \cite{vasic2020crn++} and employed in the context of support vector machines in \cite[Section~6]{choudhary2025implementationsupportvectormachines}, we utilize the \emph{assignment module} to partition the input data into batches and load them in parallel. We do not repeat the construction here and instead refer the reader to \cite{choudhary2025implementationsupportvectormachines} for details.

To implement multivariate linear regression within the CRN framework, we adopt a gradient descent scheme, as a closed-form solution is not directly realizable. Specifically, the parameters are iteratively updated to minimize the mean squared error (MSE).
Given data pairs $\{ (\mathbf{x}_i, y_i) \}^n_{i=1}$ with $\mathbf{x}_i \in \mathbb{R}^d$ and $y_i \in \mathbb{R}$, we aim to determine a parameter vector $\mathbf{W} = (W_{1}, \ldots, W_{d}) \in \mathbb{R}^d$ and intercept $b \in \mathbb{R}$ that minimize
\[
\mathrm{MSE} (\mathbf{W}, b) = \frac{1}{n} \sum_{i=1}^{n} \big( (\mathbf{W} \cdot \mathbf{x}_i + b) - y_i \big)^2.
\]
The gradient descent updates are given by
\begin{equation} \notag
\mathbf{W} \leftarrow \mathbf{W} - \eta \frac{\partial MSE}{\partial W}, \qquad b \leftarrow b - \eta \frac{\partial MSE}{\partial b},
\end{equation}
where $\eta > 0$ is the learning rate, and the gradients take the form
\begin{equation} \label{eq:multivariate_sol}
\frac{\partial MSE}{\partial W} = \frac{2}{n} \sum_{i=1}^{n} (\mathbf{W} \cdot \mathbf{x}_i + b - y_i) \mathbf{x}_i, \qquad 
\frac{\partial MSE}{\partial b} = \frac{2}{n} \sum_{i=1}^{n} (\mathbf{W} \cdot \mathbf{x}_i + b - y_i).
\end{equation}
We now show that the MSE gradients $\frac{\partial MSE}{\partial W}$ and $\frac{\partial MSE}{\partial b}$ in \eqref{eq:multivariate_sol} can be computed using the modules described in Sections~\ref{sec:modules} and~\ref{sec:div-module}. 

For the implementation, we partition the $n$ input data points into $k$ batches, each of size $\tilde{p}$, so that $n = k \tilde{p}$. Instead of computing the gradients over all $n$ data points in \eqref{eq:multivariate_sol}, we evaluate them within each batch using only $\tilde{p}$ data points. Accordingly, the gradient expressions are realized in a batch-wise manner, with $n$ replaced by $\tilde{p}$ in each computation.

\paragraph{Implementation.}
We construct a chemical reaction network (CRN) that implements multivariate linear regression training via gradient descent.

\textit{Notation Note:} Throughout this section, subscripts (e.g., $i \in \{1,\dots, d\}$) denote the feature dimensions, while superscripts
denote the index of a specific data point within a batch. In Steps 2, 3, and initial part of Step 4, we show reaction network only for a single data point by temporarily omitting the superscript $^{(l)}$.

\begin{enumerate}

\item \textbf{Initialization.}
We initialize the parameter vector $\mathbf{W} = (1, \ldots, 1) \in \mathbb{R}^d$ and the intercept $b = 1$.

\item \textbf{Feedforward computation.}  
Here, we detail the computations required for a single data pair $(\mathbf{x}, y ) \in \mathbb{R}^d \times \mathbb{R}$. 
The weights $\mathbf{W}$ and bias $b$ are shared globally across all lanes corresponding to the data set $\{ (\mathbf{x}^{(l)}, y^{(l)}) \}^{\tilde{p}}_{l=1}$.

For a given feature vector $\mathbf{x} = (x_1, \dots, x_d) \in \mathbb{R}^d$ and target value $y \in \mathbb{R}$, and under the current parameter vector $\mathbf{W} = (W_{1}, \ldots, W_{d}) \in \mathbb{R}^d$ and intercept $b \in \mathbb{R}$, we evaluate
\[
P = \mathbf{W} \cdot \mathbf{x} + b.
\]
We compute $P = P^+ - P^-$ in dual-rail form. 
To compute the positive component $P^+$, we implement the combined module for multiplication and addition (Section~\ref{sec:combined-mul-add}) via the following network:
\begin{align*}
x_{1}^+ + W_{1}^+ + O_1 & \xrightarrow{1} x_{1}^+ + W_{1}^+ + P^+ + O_1 \\
x_{1}^- + W_{1}^- + O_1 & \xrightarrow{1} x_{1}^- + W_{1}^- + P^+ + O_1 \\
& \vdots \\
x_{d}^+ + W_{d}^+ + O_1 & \xrightarrow{1} x_{d}^+ + W_{d}^+ + P^+ + O_1 \\
x_{d}^- + W_{d}^- + O_1 & \xrightarrow{1} x_{d}^- + W_{d}^- + P^+ + O_1 \\
b^+ + O_1 & \xrightarrow{1} b^+ + P^+ + O_1 \\
P^+ + O_1 & \xrightarrow{1} O_1 
\end{align*}
Under mass-action kinetics, the above network ensures that, during the $O_1$-active phase, the steady state of $P^+$ is
\[
[P^+]^{ss} = \sum_{i=1}^d ([x_i^+] [W_i^+] + [x_i^-] [W_i^-] ) + [b^+]. 
\]
The network corresponding to the negative component $P^-$ is constructed analogously and operates during the $O_1$-active phase.

\item \textbf{Error computation.}  
The prediction error $E = y - P$ is computed in dual-rail form:
\[
E = \underbrace{(y^+ + P^-)}_{E^+} \ - \ \underbrace{(y^- + P^+)}_{E^-}.
\]
To compute the positive component $E^+$, we implement the addition module (Section~\ref{sec:addition}) via the following network:
\begin{align*}
P^- + O_2 &\xrightarrow{1} P^- + E^+ + O_2 \\
y^+ + O_2 &\xrightarrow{1} y^+ + E^+ + O_2 \\
E^+ + O_2 &\xrightarrow{1} O_2
\end{align*}
Under mass-action kinetics, the above network ensures that, during the $O_2$-active phase, the steady state of $E^+$ is
\[
[E^+]^{ss} = [y^+] + [P^-]. 
\]
The network corresponding to the negative component $E^-$ is constructed analogously and operates during the $O_2$-active phase.

\item \textbf{Gradient accumulation.} 
First, we compute the gradient update. Because our error species is defined as $E = y-P$ , it gives a negative gradient which cancels subtraction of the gradient.
\[
\mathbf{R} = (R_1, \ldots, R_d) = \mathbf{W} + \frac{2 \eta}{\tilde{p}} E \mathbf{x},
\]
in dual-rail form: for $i = 1, \ldots, d$,
\[
R_i = \underbrace{ \Big( W_i^+ + \frac{2 \eta}{\tilde{p}} (E^+ x_i^+ + E^- x_i^-) \Big) }_{R_i^+} \ - \ \underbrace{ \Big( W_i^- + \frac{2 \eta}{\tilde{p}} (E^+ x_i^- + E^- x_i^+) \Big) }_{R_i^-}.
\]

Let $\eta' = \frac{2 \eta}{\tilde{p}}$ denote the effective batch learning rate.

To compute $(R_i^+, R_i^-)$, we implement the combined module for multiplication and addition (Section~\ref{sec:combined-mul-add}) via the following network:
\begin{equation} \notag
\begin{split}
E^+ + x_{i}^+ + O_3 &\xrightarrow{\eta'} E^+ + x_{i}^+ + R_{i}^+ + O_3 \\
E^- + x_{i}^- + O_3 &\xrightarrow{\eta'} E^- + x_{i}^- + R_{i}^+ + O_3 \\
E^+ + x_{i}^- + O_3 &\xrightarrow{\eta'} E^+ + x_{i}^- + R_{i}^- + O_3 \\
E^- + x_{i}^+ + O_3 &\xrightarrow{\eta'} E^- + x_{i}^+ + R_{i}^- + O_3 \\
W_{i}^+ + O_3 &\xrightarrow{1} W_{i}^+ + R_{i}^+ + O_3 \\
W_{i}^- + O_3 &\xrightarrow{1} W_{i}^- + R_{i}^- + O_3 \\
R_{i}^+ + O_3 &\xrightarrow{1} O_3 \\
R_{i}^- + O_3 &\xrightarrow{1} O_3
\end{split}
\end{equation}
Under mass-action kinetics, the above network ensures that, during the $O_3$-active phase, the steady states of $(R_i^+, R_i^-)$ are
\begin{equation} \notag
\begin{split}
[R_i^+]^{ss} & = [W_{i}^+] + \eta' \big( [E^{+}][x_{i}^{+}] + [E^{-}][x_{i}^{-}] \big),
\\ [R_i^-]^{ss} & = [W_{i}^-] + \eta' \big( [E^{+}][x_{i}^{-}] + [E^{-}][x_{i}^{+}] \big).
\end{split}
\end{equation}

Note that the above corresponds to the gradient update for a single data pair $(\mathbf{x}, y )$.
To implement one batch dataset $\{ (\mathbf{x}^{(l)}, y^{l)}) \}^{\tilde{p}}_{l=1}$, we incorporate all $\tilde{p}$ data points into the networks described in the previous steps and obtain the steady states as follows: for $i = 1, \ldots, d$,
\begin{equation} \notag
\begin{split}
[R_i^+]^{ss} & = [W_{i}^+] + \eta' \sum_{l=1}^{\tilde{p}} \big( [E^{+}]^{(l)} [x_{i}^{+}]^{(l)} + [E^{-}]^{(l)} [x_{i}^{-}]^{(l)} \big),
\\ [R_i^-]^{ss} & = [W_{i}^-] + \eta' \sum_{l=1}^{\tilde{p}} \big( [E^{+}]^{(l)} [x_{i}^{-}]^{(l)} + [E^{-}]^{(l)} [x_{i}^{+}]^{(l)} \big),
\end{split}
\end{equation}
where the superscript $^{(l)}$ denotes quantities corresponding to the $l$-th data point in the batch.

Second, we compute the intercept update $S = b + \frac{2 \eta}{\tilde{p}} E$ in dual-rail form:
\[
S = \underbrace{\big( b^+ + \frac{2 \eta}{\tilde{p}} E^+ \big)}_{S^+} \ - \ \underbrace{\big( b^- + \frac{2 \eta}{\tilde{p}} E^- \big)}_{S^-}.
\]
Using similar networks from previous steps, we obtain
\begin{equation} \notag
\begin{split}
[S^+]^{ss} = [b^+] + \frac{2 \eta}{\tilde{p}} \sum_{l=1}^{\tilde{p}} [E^{+}]^{(l)}, 
\qquad
[S^-]^{ss} = [b^-] + \frac{2 \eta}{\tilde{p}} \sum_{l=1}^{\tilde{p}} [E^{-}]^{(l)}.
\end{split}
\end{equation}
We omit the details as the construction is analogous to the gradient update.

\item \textbf{Parameter update.} 
We transfer the updated parameter values $\mathbf{R}$ and $S$ to $\mathbf{W}$ and $b$, respectively. To load $\mathbf{R}$ into $\mathbf{W}$, we implement the loading module (Section~\ref{sec:loading}) via the following networks: for $i = 1, \ldots, d$,
\begin{align*}
R_{i}^+ + O_4 &\xrightarrow{1} R_{i}^+ + W_{i}^+ + O_4 \\
R_{i}^- + O_4 &\xrightarrow{1} R_{i}^- + W_{i}^- + O_4 \\
W_{i}^+ + O_4 &\xrightarrow{1} O_4 \\
W_{i}^- + O_4 &\xrightarrow{1} O_4
\end{align*}

At steady state, we obtain that for $i = 1, \ldots, d$,
\begin{equation} \notag
\begin{split}
[W_{i}^{+}]^{ss} = [R_{i}^{+}],
\qquad
[W_{i}^{-}]^{ss} = [R_{i}^{-}].
\end{split}
\end{equation}
The loading of $S$ into $b$ is realized by an analogous network, and we therefore omit the details.

To visualise the above steps, Figure~\ref{fig:Multivariate_schematic} shows batch processing for a batch in multivariate linear regression. We verify our results with simulations in Example~\ref{ex:multi_variate}.

\begin{figure}
    \centering
    \includegraphics[width=\columnwidth]{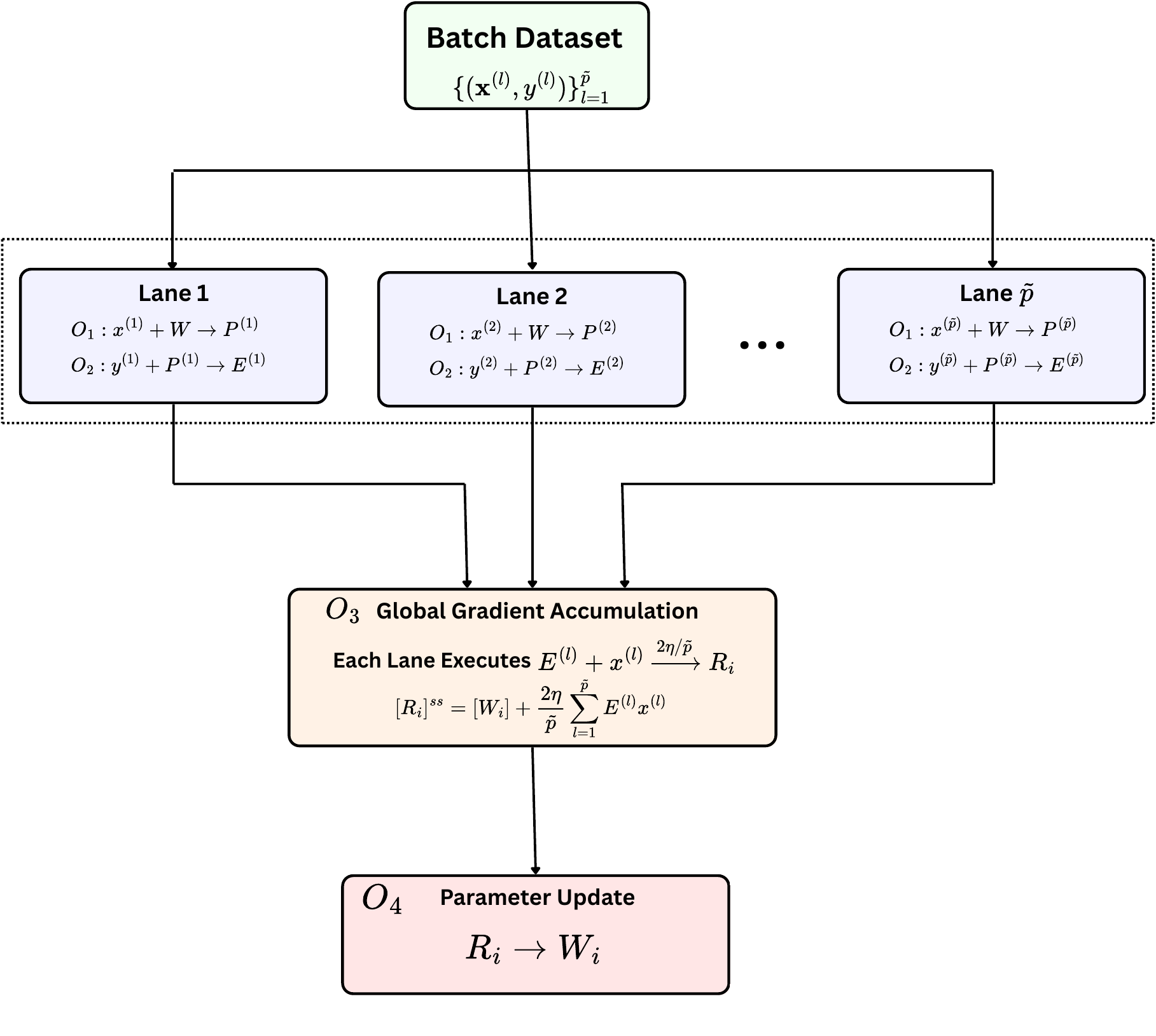}

    \caption{Chemical Reaction Network Flow for Batch Gradient Descent. Data points are processed in parallel lanes ($O_1$, $O_2$) before accumulating into a single, shared update specie $R_i$ ($O_3$). 
    In the parameter update phase ($O_4$), a loading module transfers these accumulated values.
    }
    \label{fig:Multivariate_schematic}
\end{figure}

\end{enumerate}

\begin{example}\label{ex:multi_variate}
This example shows Gradient Descent for 2D linear regression using Chemical Reaction Networks (CRNs). For the plots we generated synthetic dataset consisting of $N=20$ points with true underlying relationship $y = 1.5x_1 - 0.8x_2 + 0.5 + \epsilon_i$, where $\epsilon_i$  is gaussian noise ($\epsilon_i \sim \mathcal{N}(0, \sigma^2)$ with $\sigma = 0.2$). We can observe the convergence of weights and bias in Figure~\ref{fig:multi_variate_plots}.
\begin{figure}[H]
    \centering
    \includegraphics[width=\columnwidth]{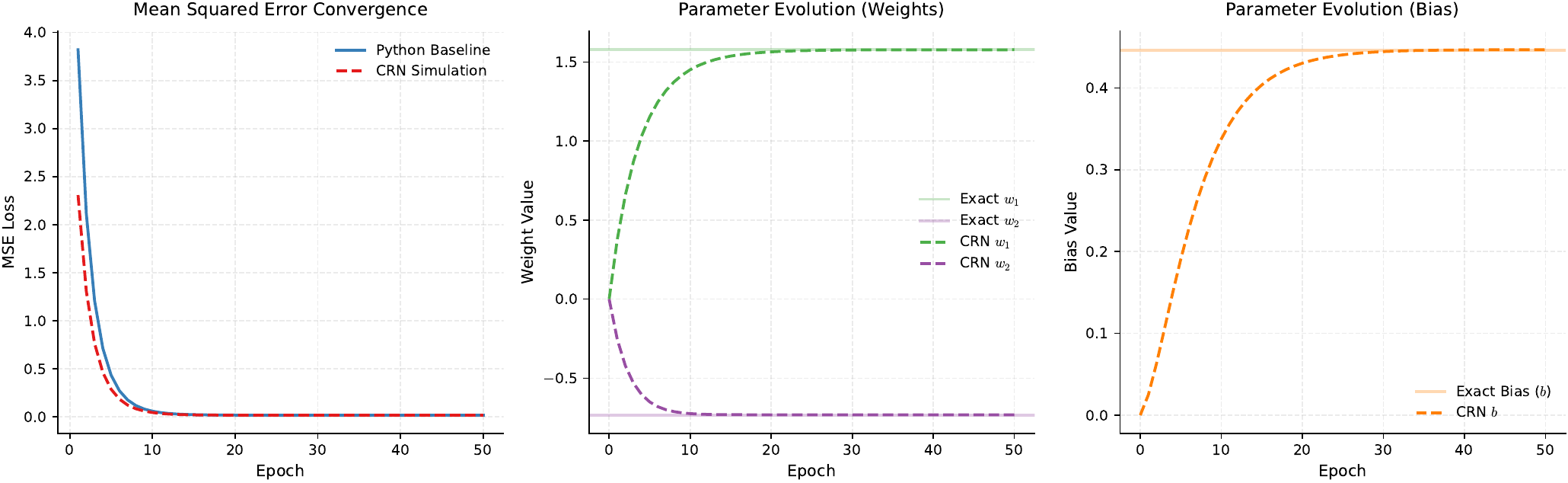}
    \caption{The plots show convergence of CRN-based Gradient Descent linear regression compared to the python baseline over 50 epochs with learning rate 0.1. In the Left (MSE Loss), Center (Weights), and Right (Bias) plots. As shown, the MSE for CRN simulation and python simulation both approaching zero. The CRN accurately calculates the target  weights and bias.}
    \label{fig:multi_variate_plots}
\end{figure}
\end{example}

\section{Linear Interpolation}
\label{sec:linear_interpolation}

Linear interpolation is a method used to estimate unknown values between two known data points by assuming a linear relationship. Given two data pairs $(x_0, y_0)$ and $(x_1, y_1)$ with $x_0, x_1, y_0, y_1 \in \mathbb{R}$, the goal is to compute the value of $y$ corresponding to a given $x$. The interpolation formula is
\begin{equation} \label{eq:linear_interpolation}
y = y_0 + \frac{x - x_0}{x_1 - x_0} (y_1 - y_0) = \frac{x_1 y_0 - x_0 y_1 + x y_1 - x y_0}{x_1 - x_0}.
\end{equation}

In this section, we construct a chemical reaction network (CRN) that computes the linear interpolation value through a sequence of oscillation-controlled reactions. We show that the value $y$ in \eqref{eq:linear_interpolation} can be computed using the modules described in Sections~\ref{sec:modules} and~\ref{sec:div-module}. To this end, we introduce the following notation:
\begin{equation} \label{def:notation_2}
\mathrm{Den} = x_1 - x_0, \qquad
\mathrm{Num} = x_1 y_0 - x_0 y_1 + x y_1 - x y_0.
\end{equation}
which represent the numerator and denominator in \eqref{eq:linear_interpolation}.

By a slight abuse of notation, throughout this section, we use the same symbols (e.g., $x_1^+$) to denote both the species and their corresponding concentrations.

\paragraph{Implementation.}
We describe a chemical reaction network (CRN) that computes the linear interpolation using dual-rail encoding.

\begin{enumerate}

\item \textbf{Dual-rail encoding.}
Each input is represented in dual-rail form. 

$(a)$ For the denominator $\mathrm{Den}$, we rewrite it as
\[
\mathrm{Den} = \underbrace{(x_1^+ + x_0^-)}_{:= \mathrm{Den}^+} \ - \ \underbrace{(x_1^- + x_0^+)}_{:= \mathrm{Den}^-}.
\]

$(b)$ For the numerator $\mathrm{Num}$, we rewrite it as
\begin{equation} \notag
\begin{split}
\mathrm{Num} & = \underbrace{(x_1^+ y_0^+ + x_1^- y_0^-) + (x^+ y_1^+ + x^- y_1^-) + (x_0^+ y_1^- + x_0^- y_1^+) + (x^+ y_0^- + x^- y_0^+)}_{:= \mathrm{Num}^+} 
\\& \quad - \underbrace{(x_1^+ y_0^- + x_1^- y_0^+) + (x^+ y_1^- + x^- y_1^+) + (x_0^+ y_1^+ + x_0^- y_1^-) + (x^+ y_0^+ + x^- y_0^-)}_{:= \mathrm{Num}^-}.
\end{split}
\end{equation}

\item \textbf{Computation of numerator and denominator.}

$(a)$ To compute the positive component $\mathrm{Num}^+$, we implement the combined multiplication and addition module (Section~\ref{sec:combined-mul-add}) during the $O_1$-active phase via the following network:
\begin{equation} \notag
\begin{split}
x_1^+ + y_0^+ + O_1 & \overset{1}\longrightarrow x_1^+ + y_0^+ + \mathrm{Num}^+ + O_1 \\
x_1^- + y_0^- + O_1 & \overset{1}\longrightarrow x_1^- + y_0^- + \mathrm{Num}^+ + O_1 \\
x^+ + y_1^+ + O_1 & \overset{1}\longrightarrow x^+ + y_1^+ + \mathrm{Num}^+ + O_1 \\
x^- + y_1^- + O_1 & \overset{1}\longrightarrow x^- + y_1^- + \mathrm{Num}^+ + O_1 \\
x_0^+ + y_1^- + O_1 & \overset{1}\longrightarrow x_0^+ + y_1^- + \mathrm{Num}^+ + O_1 \\
x_0^- + y_1^+ + O_1 & \overset{1}\longrightarrow x_0^- + y_1^+ + \mathrm{Num}^+ + O_1 \\
x^+ + y_0^- + O_1 & \overset{1}\longrightarrow x^+ + y_0^- + \mathrm{Num}^+ + O_1 \\
x^- + y_0^+ + O_1 & \overset{1}\longrightarrow x^- + y_0^+ + \mathrm{Num}^+ + O_1 \\
\mathrm{Num}^+ + O_1 & \overset{1}\longrightarrow O_1
\end{split}
\end{equation}
The network for the negative component $\mathrm{Num}^-$ is constructed analogously and in parallel.

\smallskip

$(b)$ Simultaneously, the positive component $\mathrm{Den}^+$ is computed using the addition module (Section~\ref{sec:addition}) during the $O_1$-active phase via the following network:
\begin{equation} \notag
\begin{split}
x_1^+ + O_1 & \overset{1}\longrightarrow x_1^+ + \mathrm{Den}^+ + O_1 \\
        x_0^- + O_1 & \overset{1}\longrightarrow x_0^- + \mathrm{Den}^+ + O_1 \\
        \mathrm{Den}^+ + O_1 & \overset{1}\longrightarrow O_1
\end{split}
\end{equation}
The network for the negative component $\mathrm{Den}^-$ is constructed analogously and in parallel.

\item \textbf{Computation of the interpolated value.}
We apply the generalized division module (Section~\ref{sec:generalized-div}) to obtain the interpolated value $y$, given by
\[
(y^+, y^-)
= \texttt{Generalized Division} \big( \mathrm{Num}^+,\mathrm{Num}^-,\mathrm{Den}^+,\mathrm{Den}^- \big).
\]
\end{enumerate}

\begin{example}
In this example, we compute different $y$ values given two reference points, $(x_1, y_1) = (1,2)$ and $(x_2,y_2)=(3,6)$, for different $x$ ranging from $0.5$ to $3.5$. We compare these values with the exact values. Note that  under mass action kinetics ODEs, modules converge to their mathematical values asymptotically which introduces unavoidable error \cite{vasic2020crn++}.
\begin{figure}[H]
    \centering
    \includegraphics[width=0.5\linewidth]{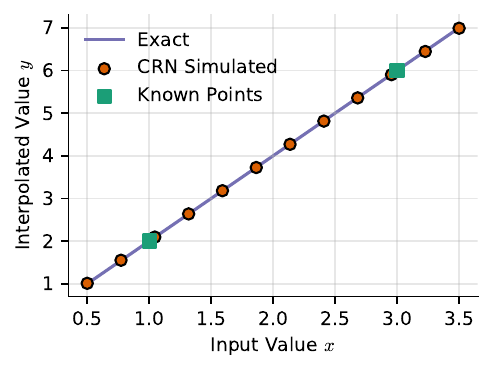}
    \caption{CRN simulated linear interpolation. CRN calculated value compared against exact values.}
    \label{fig:interpolation_result}
\end{figure}
\end{example}

\section{Discusion}\label{sec:conclusion}

We proposed a novel algorithm to compute division for signed numbers. Using dual-rail encoding and chemical reaction networks, we established a framework to compute weights in linear regression and coefficients in linear interpolation. Our results were verified through simulations on synthetic datasets. 

The simulations are implemented in Python and provide code for implementing future algorithms for CRNs using Python similar to CRN++\cite{vasic2020crn++} in Mathematica. Additionally, there may be other fluctuations and other factors in practical implementation affecting results, which are not captured by simulations.

Furthermore, there is scope to include a judgment module \cite{Buisman2009fi} to enable early stopping in linear regression using gradient descent.

\bibliographystyle{plainnat}
\bibliography{sn-bibliography}

\end{document}